\documentclass{amsart}
\date{March 15, 2026}
\usepackage{amsthm,amssymb,bbm}
\usepackage{mathtools}

\newtheorem{theorem}{Theorem}

\newtheorem{lemma}{Lemma}

\def\tr{{\rm tr \,}}

\def\cz{\mathbb{C}}
\def\nz{\mathbb{N}}
\def\rz{\mathbb{R}}

\def\bp{{\bold p}}

\def\bx{{\bold x}}

\def\by{{\bold y}}
\def\bz{\mathbf{z}}

\def\gQ{\mathfrak{Q}}
\def\gS{\mathfrak{S}}

\def\rd{\mathrm{d}}
\def\ri{\mathrm{i}}

\def\atp{\bigwedge\nolimits} 

\def\ca{\mathcal{E}^\mathrm{CA}}
\def\hf{\mathcal{E}^\mathrm{HF}}
\def\m{\mathcal{E}^\mathrm{M}}

\title[Cs\'anyi and Arias]{On Cs\'anyi's and Arias' Functional for
  Ground States Energy of Multi-Particle Fermion Systems: Asymptotics}
\author[H. Siedentop]{Heinz Siedentop}
\address{Mathematisches  Institut\\
  Ludwig-Maximilians-Universit\"at M\"unchen\\
  Theresienstr. 39\\ 80333 M\"unchen\\Germany\\ and Munich Center for
  Quantum Science and Technology (MCQST)\\ Schellingstr. 4\\ 80799
  M\"unchen\\ Germany} \email{h.s@lmu.de}

\keywords{Exchange-correlation energies, 1-pdm functionals}

\subjclass[2023]{81-V45, 81-V55, 81-V74}

\thanks{This work was partially supported by the Deutsche
  Forschungsgemeinschaft via the TRR 352 – Project-ID 470903074 and
  EXC-2111-390814868}

\begin{document}

\begin{abstract}
  We show that Cs\'anyi's and Arias' energy functional of the
  one-particle reduced density matrix is bounded from below by the
  M\"uller functional and bounded from above by the Hartree-Fock
  functional. We use this fact to derive an asymptotic expansion of
  the ground state energy of this functional which agrees with the
  quantum energy to third order.
\end{abstract}
\maketitle
\section{Introduction}

Cs\'anyi and Arias \cite{CsanyiArias2000} introduced an energy
functional, hereafter referred to as the CA functional of the
one-particle reduced density matrix $\gamma$ which they
call``corrected Hartree-Fock functional''. To define it we first
define the Hartree-Fock functional
\begin{multline}
  \label{eq:hf}
  \hf(\gamma)\\ :=\tr \left((\tfrac12\bp^2-V)\gamma\right) +
    \underbrace{\tfrac12\int_{\rz^3}\rd \bx \int_{\rz^3}
      \rd\by{\rho_\gamma(\bx)\rho_\gamma(\by)\over|\bx-\by|}}_{=:
      D[\rho_\gamma]}- \underbrace{\tfrac12 \int_\Gamma\rd x
      \int_\Gamma\rd y {|\gamma(x,y)|^2\over|\bx-\by|}}_{=:X[\gamma]}
\end{multline}
where $\rho_\gamma$ is the particle density of $\gamma$, i.e.,
$\rho_\gamma(\bx) := \sum_n\sum_{\sigma=1}^q\lambda_n|\xi_n(x)|^2$
with the notation $x=(\bx,\sigma)\in \Gamma:=\rz^3\times \cz^q$ and
$\int_\Gamma\rd x:= \int_{\rz^3} \rd\bx \sum_{\sigma=1}^q$. Here
$\lambda_1,\lambda_2,...$ and $\xi_1,\xi_2,...$ are the eigenvalues
and eigenfunctions of $\gamma$, i.e.,
$\gamma=\sum_n \lambda_n|\xi_n\rangle\langle\xi_n|$. Moreover, $q$ is
the number of spin states per particle, i.e., for non-relativistic
electrons $q=2$. Eventually, $T(\bp)$ is the kinetic energy written in
terms of the momentum $\bp:=-\ri\nabla$ and $V$ is the external
potential. As we will see in the proof, the exact form of the kinetic
energy $T$ and of the potential $V$ is not relevant for our main
result. However, for definiteness we assume $T(\bp)=\bp^2/2$ and
$V(\bx)=Z/|\bx|$ which is essential for the asymptotic expansion
\eqref{Korrolar-ca}.

Using this notation the
CA functional is
\begin{equation}
  \label{eq:ca}
  \ca(\gamma):= \hf(\gamma) -X[\sqrt{\gamma(1-\gamma)}].
\end{equation}

Cs\'anyi and Arias found the functional $\ca$ by approximating the
two-particle density matrix quadratically keeping some quantum
statistical properties, some symmetries, and normalization. The only
other functional that they found under those requirements is the
M\"uller functional (in CA's diction the ``corrected Hartree
functional''): 
\begin{equation}
  \label{eq:muller}
  \m(\gamma):=\tr ((T(\bp)-V(\bx))\gamma) + D[\rho_\gamma]-X[\sqrt\gamma].
\end{equation}

We consider these functionals on certain subsets of the set
$\gS^1(L^2(\Gamma:\cz))$ of trace class operators on square integrable
complex valued functions on $\Gamma$, namely
\begin{align}
  \gQ:=&\{\gamma\in \gS^1(L^2(\Gamma:\cz))\big |0\leq\gamma\leq1,\ T(\bp) \gamma \in \gS^1(L^2(\Gamma:\cz)\},\\
  \gQ_N:=&\{\gamma\in\gQ \big| \tr\gamma\leq N\}
\end{align}
where $N\in\nz$ is the number of electrons, i.e., the set of 1-pdms
with finite particle number and finite kinetic energy, and, possibly,
with particle number not exceeding $N$.

Hartree-Fock theory is well established in non-relativistic quantum
mechanics. It is an upper bound on the true quantum energy and agrees
energetically for heavy atoms with the underlying non-relativistic
quantum mechanics up to the subleading order in the atomic number (see
Lieb and Simon \cite{LiebSimon1977} for the leading order, Siedentop
and Weikard
\cite{SiedentopWeikard1986,SiedentopWeikard1987O,SiedentopWeikard1988,SiedentopWeikard1989}
(upper and lower bound) and Hughes \cite{Hughes1986,Hughes1990} (lower
bound) for the leading correction, and Bach \cite{Bach1993} and
Fefferman and Seco
\cite{FeffermanSeco1989,FeffermanSeco1990O,FeffermanSeco1992,FeffermanSeco1993,FeffermanSeco1994}
for the subleading order). Moreover agreements of the densities are
shown on the scale $Z^{-\frac13}$ in \cite{LiebSimon1977} and on
smaller scales by Iantchenko \textit{et al.}
\cite{Iantchenkoetal1996}, Iantchenko \cite{Iantchenko1997} and by
Iantchenko et al. \cite{IantchenkoSiedentop2001} (density matrix). The
same accuracy of the energy has been shown for the M\"uller functional
\cite{Siedentop2009,Siedentop2014}. It is conjectured a lower bound on
the quantum energy. However, this was only shown for two electrons
(Frank et al. \cite{Franketal2007} based on an inequality of Wigner
and Yanase \cite{WignerYanase1964}.

Although the CA functional has been already successfully benchmarked
numerically by Jara-Cortes et al. \cite[Table 1]{Jara-Cortesetal2022},
an analytic study is lacking. Our goal is to fill this gap and to show
how the CA functional fits in the picture from an analytical point of
view. In particular we will show that it is actually sandwiched in
between the Hartree-Fock and the M\"uller functional. Establishing
this result will have the consequence that the CA ground state energy
of atoms of atomic number $Z$ agrees with the true quantum energy up
subsubleading order in $Z$.

We start with some preparatory facts.
\begin{lemma}
  Assume $\mu,\lambda\in[0,1]$. Then
  \begin{equation}
    \label{eq:lm}
    \lambda\mu+\sqrt{\lambda(1-\lambda)\mu(1-\mu)} \leq \sqrt{\lambda\mu}.  \end{equation}
\end{lemma}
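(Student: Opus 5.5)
Factor out $\sqrt{\lambda\mu}$ and reduce the statement to Cauchy--Schwarz in $\rz^2$. If $\lambda\mu=0$, both sides of \eqref{eq:lm} vanish: the product $\lambda(1-\lambda)\mu(1-\mu)$ contains the factor $\lambda\mu$. So assume $\lambda,\mu\in(0,1]$ and divide \eqref{eq:lm} by $\sqrt{\lambda\mu}>0$. The claim becomes
\begin{equation*}
  \sqrt{\lambda}\sqrt{\mu}+\sqrt{1-\lambda}\sqrt{1-\mu}\leq 1 .
\end{equation*}

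The left-hand side is the Euclidean inner product of the vectors $(\sqrt\lambda,\sqrt{1-\lambda})$ and $(\sqrt\mu,\sqrt{1-\mu})$ in $\rz^2$. Each vector has norm $1$, since $\lambda+(1-\lambda)=1$ and likewise for $\mu$. The Cauchy--Schwarz inequality therefore bounds the inner product by $1$, which is exactly the reduced claim.

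An equivalent route avoids Cauchy--Schwarz. Write $\lambda=\cos^2\alpha$ and $\mu=\cos^2\beta$ with $\alpha,\beta\in[0,\pi/2]$. The left-hand side becomes $\cos\alpha\cos\beta+\sin\alpha\sin\beta=\cos(\alpha-\beta)\le 1$. Alternatively, the AM--GM inequality $\sqrt{ab}\le (a+b)/2$ applied to each product gives $\tfrac12(\lambda+\mu)+\tfrac12(2-\lambda-\mu)=1$.

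There is no real obstacle. The only points needing care are the degenerate case $\lambda\mu=0$, which makes the division legitimate, and the fact that all square roots are of nonnegative numbers because $\lambda,\mu\in[0,1]$. Equality holds exactly when $\lambda=\mu$ or $\lambda\mu=0$.
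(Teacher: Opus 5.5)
Your proof is correct and is essentially the paper's argument: the paper applies the Schwarz inequality to the left side directly, pairing $(\lambda,\sqrt{\lambda(1-\lambda)})$ with $(\mu,\sqrt{\mu(1-\mu)})$, whose norms are $\sqrt{\lambda}$ and $\sqrt{\mu}$. That is your unit-vector argument scaled by $\sqrt{\lambda\mu}$, which lets the paper skip both the division and the separate case $\lambda\mu=0$.
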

\begin{proof}
  Applying the Schwarz inequality to the sum on the left side of
  \eqref{eq:lm} give
  \begin{equation}
    \lambda\mu+\sqrt{\lambda(1-\lambda)\mu(1-\mu)} \leq \sqrt{\lambda^2+\lambda(1-\lambda)}\sqrt{\mu^2+\mu(1-\mu)}=\sqrt{\lambda\mu} 
  \end{equation}
  which is identical with the right side of \eqref{eq:lm}.
  \end{proof}
We also remind of the formula (Fefferman and de la Llave
\cite[p. 124]{FeffermandelaLlave1986})
\begin{equation}
  \label{eq:FeffermandelaLlave}
  {1\over |\bx-\by|} = {1\over\pi} \int_0^\infty{\rd r\over r^5} \int_{\rz^3}\rd \bz\; B_{r,\bz}(\bx)B_{r,\bz}(\bx)
\end{equation}
where $B_{r,\bz}$ is the characteristic function of the unit ball centered around $\bz$, i.e.,
\begin{equation}
  B_{r,\bz}(\bx):=
  \begin{cases}
    1&|\bx-\bz|<r \\
    0&|\bx-\bz|\geq r
  \end{cases}.
\end{equation}

Writing
\begin{equation}
  \label{eq:M}
  E_\mathrm{M}(Z):=\inf\m(\gQ_Z)
\end{equation}
for the neutral atomic M\"uller ground state energy our basic result
is
\begin{theorem}
  \label{Energie-ca}
  Assume $\gamma\in \gQ$. Then
  \begin{equation}
    \label{eq:ungleichung}
    \m(\gamma)\leq \ca(\gamma)\leq\hf(\gamma).
  \end{equation}
\end{theorem}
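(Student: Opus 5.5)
The upper bound is immediate. Since $X[\delta]\geq0$ for every operator $\delta$ with square-integrable kernel, because the integrand $|\delta(x,y)|^2/|\bx-\by|$ is nonnegative, we get $\ca(\gamma)=\hf(\gamma)-X[\sqrt{\gamma(1-\gamma)}]\leq\hf(\gamma)$. Here $\sqrt{\gamma(1-\gamma)}$ is Hilbert--Schmidt because $\gamma(1-\gamma)\leq\gamma$ is trace class. We only need that $X$ of it is finite, or may be $+\infty$ in a harmless way, which follows from Hardy's inequality and finite kinetic energy.

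For the lower bound, the one-body and direct terms cancel, so it suffices to show
\begin{equation*}
  X[\gamma]+X[\sqrt{\gamma(1-\gamma)}]\leq X[\sqrt\gamma].
\end{equation*}
Insert the Fefferman--de la Llave formula \eqref{eq:FeffermandelaLlave}, reading the second factor as $B_{r,\bz}(\by)$. Each $X[\delta]$ then becomes $\frac1{2\pi}\int_0^\infty\rd r\,r^{-5}\int\rd\bz\,\tr(\delta B\delta^* B)$, with $B=B_{r,\bz}$ acting as a multiplication operator (times the identity in spin). So it suffices to prove, for every orthogonal projection $B$ given by multiplication with a characteristic function,
\begin{equation*}
  \tr(\gamma B\gamma B)+\tr\big(\sqrt{\gamma(1-\gamma)}B\sqrt{\gamma(1-\gamma)}B\big)\leq\tr(\sqrt\gamma B\sqrt\gamma B).
\end{equation*}

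To prove this, expand in the eigenbasis $\gamma=\sum_n\lambda_n|\xi_n\rangle\langle\xi_n|$. For any function $f$ of $\gamma$,
\begin{equation*}
  \tr(f(\gamma)Bf(\gamma)B)=\sum_{m,n}f(\lambda_m)f(\lambda_n)\,|\langle\xi_m,B\xi_n\rangle|^2 .
\end{equation*}
The weights $|\langle\xi_m,B\xi_n\rangle|^2\geq0$ do not depend on $f$. Applying this with $f(t)=t$, $f(t)=\sqrt{t(1-t)}$ and $f(t)=\sqrt t$, the inequality follows termwise from the Lemma \eqref{eq:lm} with $\lambda=\lambda_m$, $\mu=\lambda_n$. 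If $\gamma$ has a nontrivial kernel, we complete the eigenbasis; all three functions vanish at $0$, so the extra vectors contribute nothing. Integrating over $\bz$ and $r$ with the positive measure $r^{-5}\rd r\,\rd\bz/\pi$ yields the desired inequality, and hence $\m(\gamma)\leq\ca(\gamma)$.

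The main obstacle is justifying the interchanges of sums and integrals and the finiteness of each term. All the summands and integrands are nonnegative, so Tonelli's theorem allows the interchanges. The finiteness of $X[\sqrt\gamma]$ for $\gamma\in\gQ$ is the only genuine analytic input. It follows from $X[\sqrt\gamma]\leq\frac12\sum_n\lambda_n\langle\xi_n,|\bx-\cdot|^{-1}\ldots\rangle$-type bounds, or more simply from Hardy's inequality, $|\bx|^{-1}\leq\epsilon\bp^2+\epsilon^{-1}$, together with $\tr(\bp^2\gamma)<\infty$. Alternatively, we can state the inequality in $[0,\infty]$ and subtract only when the terms are finite.
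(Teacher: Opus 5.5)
Your proposal is correct and follows the paper's proof. Both reduce the lower bound to $X[\gamma]+X[\sqrt{\gamma(1-\gamma)}]\le X[\sqrt\gamma]$, insert the Fefferman--de la Llave representation, and apply the Lemma termwise in the eigenbasis of $\gamma$ against the nonnegative weights $|\langle\xi_m,B_{r,\bz}\xi_n\rangle|^2$. Your remarks on Tonelli, on completing the eigenbasis, and on finiteness of $X[\sqrt\gamma]$ via Hardy's inequality supply justifications the paper leaves implicit.
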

\begin{proof}
    The upper bound on $\ca(\gamma)$ is immediate from the definition of $\ca$. The remaining lower bound is equivalent to showing
    \begin{equation}
      \label{eq:aequiungleichung}
      X[\gamma]+X[\sqrt{\gamma(1-\gamma)}]\leq X[\sqrt\gamma].
    \end{equation}

    First note that
    \begin{equation}
      0\leq\int_\Gamma\rd x \int_\Gamma\rd y B_{r,\bz}(x) \xi_m(x)\overline{\xi_n(x)} \overline{\xi_m(y)\overline{\xi_n(y)}} B_{r,\bz}(\by)
    \end{equation}
    and therefore, choosing $\lambda=\lambda_m$ and $\mu=\lambda_n$ in
    \eqref{eq:lm},
    \begin{multline}
      0\leq \left(\sqrt{\lambda_m\lambda_n}- \lambda_m\lambda_n -\sqrt{\lambda_m(1-\lambda_m)\lambda_n(1-\lambda_n)}\right)\\
      \times \int_\Gamma\rd x \int_\Gamma\rd y B_{r,\bz}(x)
      \xi_m(x)\overline{\xi_n(x)}
      \overline{\xi_m(y)\overline{\xi_n(y)}} B_{r,\bz}(\by).
    \end{multline}
    We divide the inequality by $2\pi r^5$ and integrate over $r$ and
    $\bz$ and use Fefferman's and de la Llave's formula
    \eqref{eq:FeffermandelaLlave} for the Coulomb kernel and get
\begin{multline}
  \tfrac12 \int_\Gamma\rd x \int_\Gamma\rd y
  {|\sum_m\lambda_m\xi_m(x)\overline{\xi_m(y)}|^2+|\sum_m\sqrt{\lambda_m(1-\lambda_m)}\xi_m(x)\overline{\xi_m(y)}|^2\over|\bx-\by|}\\
  \leq
   \tfrac12 \int_\Gamma\rd x \int_\Gamma\rd y
  {|\sum_m\sqrt{\lambda_m}\xi_m(x)\overline{\xi_m(y)}|^2\over|\bx-\by|}
\end{multline}
which is \eqref{eq:aequiungleichung} written in terms of the
eigenfunctions and eigenvalues of $\gamma$.
\end{proof}
Theorem \ref{Energie-ca} has an immediate consequence on the expansion
of ground state energy
$E_Q(Z):= \inf \sigma(H_{Z,Z})$ of the neutral atomic Hamiltonian (in the
$N$-electron space $\atp_{n=1}^NL^2(\Gamma:\cz)$)
\begin{equation}
  \label{eq:hnz}
  H_{N,Z}:= \sum_{n=1}^N\left( {\bp^2_n\over2}-{Z\over|\bx_n|}\right)
  +\sum_{1\leq m<n\leq N}{1\over|\bx_m-\bx_n|}
\end{equation}
in terms of the CA ground state energy
\begin{equation}
  \label{eq:groundstateenergy-ca}
  E_\mathrm{CA}(Z):=\inf\ca(\gQ_Z).
\end{equation}
We get
\begin{equation}
  \boxed{\label{Korrolar-ca}
 E_\mathrm{Q}(Z)= E_\mathrm{CA}(Z)+ o(Z^\frac53)=
e_\mathrm{TF}Z^\frac73+\tfrac12Z^2 - c_\mathrm{S}Z^\frac53 +
o(Z^\frac53)}
\end{equation}
where $e_\mathrm{TF}$ is the Thomas-Fermi energy of hydrogen and $c_S$
is the Schwinger constant.
\begin{proof}[Proof of Formula \eqref{Korrolar-ca}]  
  We recall the following three fundamental facts:
  \begin{description}
  \item[Quantum Ground State Energy] Fefferman and Seco
    \cite{FeffermanSeco1989,FeffermanSeco1990O,FeffermanSeco1992,FeffermanSeco1993,FeffermanSeco1994}
    showed
    \begin{equation}
      \label{eq:FS}
      E_\mathrm{Q}(Z)=e_\mathrm{TF}Z^\frac73+\tfrac12Z^2 - c_\mathrm{S}Z^\frac53 +
o(Z^\frac53).
    \end{equation}
  \item[Quantum versus HF Ground State Energy] Bach \cite{Bach1993} showed
    \begin{equation}
      \label{eq:b}
     E_\mathrm{Q}(Z)- E_\mathrm{HF}(Z)=o(Z^\frac53).
    \end{equation}
  \item[Quantum versus M\"uller] The difference of M\"uller and
    Hartree-Fock energy fulfills (see \cite[Satz 1]{Siedentop2009} and
    also \cite{Siedentop2014})
    \begin{equation}
      \label{eq:m}
       E_\mathrm{M}(Z)-E_\mathrm{HF}(Z)=o(Z^\frac53).
    \end{equation}
  \end{description}
  These three facts together with the sandwiching of the CA-energy
  proven in Theorem \ref{Energie-ca} gives \eqref{Korrolar-ca}.
  
\end{proof}


\def\cprime{$'$}

\end{document}